\newtheorem{theorem}{Theorem}[]
\def\p{\partial}
\def\({\left(}
\def\){\right)}
\def\[{\left[}
\def\]{\right]}
\newcommand{\bea}{\begin{eqnarray}}
\newcommand{\eea}{\end{eqnarray}}
\newcommand{\be}{\begin{equation}}
\newcommand{\ee}{\end{equation}}
\newcommand{\ba}{\begin{align}}
\newcommand{\ea}{\end{align}}
\newcommand{\im}{\mathfrak{Im}}
\newcommand{\re}{\mathfrak{Re}}
\title{Non-modal effects in black hole perturbation theory: Transient Superradiance}
\date{2024}
\author[a]{Javier Carballo,}
\emailAdd{j.carballo@soton.ac.uk}
\author[b]{Christiana Pantelidou,}
\emailAdd{christiana.pantelidou@ucd.ie}
\author[a]{and Benjamin Withers}
\emailAdd{b.s.withers@soton.ac.uk}
\affiliation[a]{Mathematical Sciences and STAG Research Centre, University of Southampton, Highfield, Southampton SO17 1BJ, UK}
\affiliation[b]{School of Mathematics and Statistics, University College Dublin, Belfield, Dublin 4, Ireland}
\abstract{We study the non-modal stability of black hole spacetimes under linear perturbations. We show that large-amplitude growth can occur at finite time, despite asymptotic decay of linear perturbations. In the example presented, the physical mechanism is a transient form of superradiance, and is qualitatively similar to the transition to turbulence in Navier-Stokes shear flows. As part of the construction we provide a theorem for the positivity of QNM energies, and introduce a truncated-Hamiltonian approach to black hole pseudospectra which does not suffer from convergence issues.}
\begin{document}
\maketitle

\section{Introduction and main results} \label{intro}

The study of linear black hole perturbations is of interest from the point of view of gravitational wave observations \cite{LIGOScientific:2014pky, LIGOScientific:2016aoc} and strongly-coupled many body systems through AdS/CFT \cite{Maldacena:1997re, Witten:1998qj}. Black hole perturbations explore the dissipative nature of the black hole horizon, and are consequently governed by \emph{non-normal} operators. This technical feature brings certain technical challenges, such as a lack of orthogonality and completeness of eigenfunctions, but it also means that black holes should display a wealth of interesting physical phenomena that normal systems do not. This work explores these possibilities and presents one such new phenomenon: \emph{transient superradiance}.

In previous investigations of the non-normality of black hole linear operators in the literature, great emphasis has been placed on analysing the `stability' of the spectrum of eigenfunctions (quasinormal modes, QNMs), under the influence of small changes to the environment \cite{Nollert:1996rf, Nollert:1998ys} (see also \cite{Daghigh:2020jyk}), now undergoing a recent resurgence \cite{Jaramillo:2020tuu}. However, non-normal systems also exhibit important \emph{non-modal} dynamical phenomena, rooted in aspects of linear perturbations which are not spectral. One way to access such non-modal physics is through analysis of the \emph{pseudospectrum} \cite{TrefethenEmbree2005}, and indeed the idea of using the pseudospectrum as a window into non-modal dynamical physics for binary black hole mergers was discussed in \cite{Jaramillo:2022kuv} and for horizonless compact objects in \cite{Boyanov:2022ark}.

Instead, in this work, we turn to a direct time domain analysis of non-modal phenomena in black hole perturbation theory. This programme was initiated in \cite{Carballo:2024kbk}, where it was shown that linear perturbations could decay arbitrarily slowly, despite all QNMs exhibiting fast exponential decay.\footnote{Boundedness of black hole perturbations without mode decomposition was previously investigated in \cite{Dotti14, Dotti:2016cqy}.} Here we show that black holes with decaying QNMs can admit perturbations whose energies grow in time. Such linear perturbations will ultimately decay at asymptotic time, but there is a transient period of significant growth which may source nonlinear effects. Indeed, non-modal growth is an important ingredient in the study of the transition to turbulence in fluid dynamics, and black holes display a strikingly similar phenomenology.
To illustrate our results in the simplest possible context, we focus on  Reissner-Nordstr\"om (RN) - AdS$_4$ spacetime, linearly perturbed by a charged complex scalar field. This allows us to introduce and exploit a non-modal analogue of superradiant scattering, as we shall explain.

We take our complex scalar field $\psi$ to have mass $m^2L^2 = -2$. QNMs can be defined as plane-wave perturbations $\psi(\tau,z,\vec{x}) = z^2\chi(z) e^{-i \omega \tau + i \vec{k}\cdot\vec{x}}$ which are ingoing at the future event horizon and normalisable at the AdS boundary. Here $\tau$ labels hyperboloidal slices of the spacetime -- spacelike slices that pierce the future event horizon. The eigenvalue problem which determines the spectrum of modes $\omega_n$ is as follows,
\be
\mathcal{H}\,\begin{pmatrix} 
   \chi(z) \\
     -i\omega \chi(z) 
    \end{pmatrix} = \omega \begin{pmatrix} 
   \chi(z) \\
     -i\omega \chi(z) 
    \end{pmatrix}, \label{hamchi}
\ee
where $\mathcal{H}$ is a 2$\times$2 matrix and a second-order differential operator in $z$, given later in \eqref{eq:Hamiltonian}. Imposing regularity of $\chi$ at the AdS boundary, $z=0$, at the future horizon, $z=1$,  corresponds to QNM boundary conditions and quantises the spectrum -- this is shown in figure \ref{fig:spec_probe} for the probe limit (charge of the scalar $q\to\infty$), and later in figure \ref{fig:spec} for the finite $q$ case. As the black hole temperature is lowered, some QNMs move into the upper half-plane and the system becomes modally unstable, corresponding to a transition to the broken phase of the holographic superconductor \cite{Gubser:2008px, Hartnoll:2008vx, Hartnoll:2008kx}, otherwise interpreted as a superradiant instability for a single mode, whose energy grows in time.

\begin{figure}[h!]
\centering
\includegraphics[width=0.5\columnwidth]{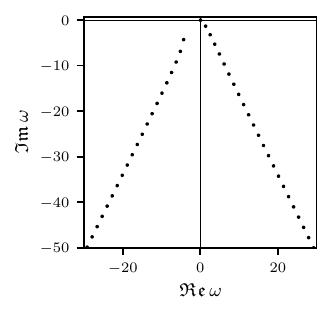}
\caption{The $\vec{k}=0$ QNM spectrum for charged scalar perturbations of the RN-AdS$_4$ black brane in the probe limit $q\to\infty$ at $\mu q=3.9$ where the system is modally stable. Finite $q$ results, beyond the probe limit, are given later in figure \ref{fig:spec}.}
\label{fig:spec_probe}
\end{figure}

We can assess the growth or decay of more general linear perturbations (beyond a single mode) by computing their energy, $E_\psi(\tau)$. The key result of this work is that $E_\psi(\tau)$ can grow, even when all QNMs are exponentially decaying. Mathematically, this occurs because $\mathcal{H}$ is non-normal with respect to the inner product associated to $E_\psi(\tau)$, and consequently the QNMs are not orthogonal to one another under this inner product.\footnote{There are several approaches to constructing orthogonality relations for QNMs in the literature \cite{Leaver, Green:2022htq, London:2023aeo}, however the relevant observable for us is the energy, and thus it is lack of orthogonality under this specific energy inner product which is of physical relevance.} Thus the energy of a sum of QNMs is not the sum of the energy of each QNM, allowing for a non-modal form of superradiance to occur even when each individual mode is superradiant stable. Earlier, it was proved that this allows for linear black hole perturbations which decay arbitrarily slowly \cite{Carballo:2024kbk}, despite each QNM decaying exponentially fast.

\begin{figure}[h!]
\centering
\includegraphics[width=0.5\columnwidth]{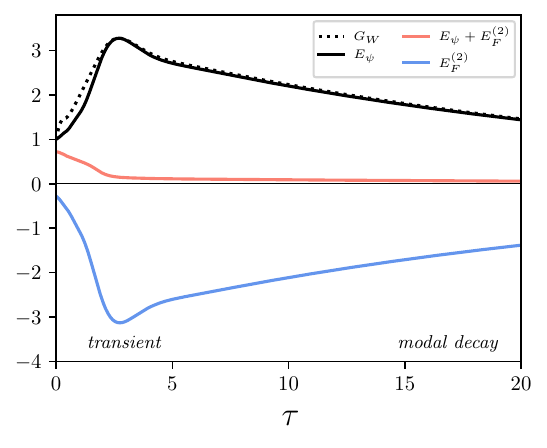}
\caption{Time evolution of the energy for linear perturbations to the holographic superconductor, $E_\psi$, demonstrating a period of transient growth despite modal stability. The mechanism is that of a transient form of superradiance, where energy is borrowed from the electric field, $E_{F}$. The example shown corresponds to the probe limit $q \to \infty$ with $\mu q = 3.9$, $\vec{k} = 0$, with initial data such that the system maximises $E_\psi$ at time $\tau_*=2.7$ within a subspace spanned by $M=10$ QNMs. The dotted line $G_W(\tau)$ gives a sharp upper bound on $E_\psi(\tau)$ for all possible initial data formed from 10 QNMs. Finite $q$ results, beyond the probe limit, are given later in figure \ref{fig:OptimalPert}.}
\label{fig:EEE}
\end{figure}

As an illustration of this phenomenon, the growth of $E_\psi(\tau)$ for a particular sum of QNMs can be seen in figure \ref{fig:EEE}. For simplicity of this introductory presentation, we have here removed backreaction by first taking the probe limit, $q\to\infty$ (finite $q$ is considered later). A microscopic interpretation of this growth is as follows. The RN-AdS$_4$ black hole has a radial electric field, and this encourages the classical wave analogue of pair production of $\pm$-charges outside the black hole. Like-charges are repelled from the black hole, forming a charged scalar cloud outside the horizon ($E_\psi$ increases), while opposite-charges are attracted to the black hole, depleting the strength of the bulk electric field (so that the energy associated to the electric field, $E_{F}$, decreases). The total energy $E = E_\psi + E_{F}$ can only decrease over time due to losses through the horizon. When the temperature is lowered beyond a critical value this is a runaway process leading to the formation of a hairy black hole. There is then a QNM which grows exponentially. At higher temperatures this process still occurs, but there are no growing QNMs and it is a transient phenomenon arising due to non-modal effects.

Full details of the derivation of these results are presented in section \ref{sec:details}. However, let us first review a similar situation in fluid dynamics and make a side-by-side comparison.

\section{The analogy with transients in plane-Poiseuille flow}\label{sec:oscompare}

As a point of comparison with the results outlined in section \ref{intro}, we consider a paradigmatic example of non-modal transients in fluid dynamics -- incompressible perturbations of the plane-Poiseuille flow. This shares much of the phenomenology of our black hole example, and we have summarised the analogous features where possible in table \ref{tab:analogy}. Full computations are given in appendix \ref{app:PPF}.

\begin{figure}[h!]
\centering
\includegraphics[width=0.5\columnwidth]{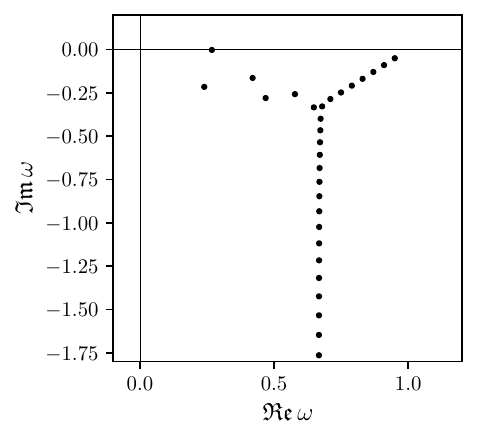}
\caption{The QNM spectrum for even perturbations to the plane-Poiseuille flow governed by the non-normal Orr-Sommerfeld operator. The choice of parameters is $\alpha = 1$, $Re=5000$ where the system is modally stable.}
\label{fig:OSspec}
\end{figure}

Plane-Poiseuille flow corresponds to a stationary, laminar solution to the Navier-Stokes equations with a flow in the $x$ direction with velocity profile $\vec{u} = (u^x, u^y, u^z) = (1-y^2,0,0)$ between two parallel $x-z$ plates at $y = \pm 1$, driven by a pressure gradient $\vec{\nabla} P = (-2\nu \rho, 0, 0)$ where $\nu$ is the viscosity and $\rho$ the density. Perturbing this flow by a stream function $\Phi(t,x,y)$ as follows, $\vec{u} = \left(1-y^2 + \partial_y\Phi, - \partial_x \Phi, 0\right)$ and decomposing into plane waves $\Phi = \phi(y)e^{-i\omega t + i \alpha x}$ gives the Orr-Sommerfeld equation,
\be
\alpha\,O_\text{OS}\,\phi(y) = \omega \phi(y). \label{OSeq}
\ee
Here $O_\text{OS}$ is the Orr-Sommerfeld differential operator \eqref{OSoperator}, characterised by the Reynolds number $Re = \nu^{-1}$. Boundary conditions correspond to $\phi(\pm 1) = \phi'(\pm 1) = 0$, which quantise the spectrum. The spectrum for plane-Poiseuille flow is shown in figure \ref{fig:OSspec}. As $Re$ is increased further, the system becomes modally unstable \cite{Orszag1971AccurateSO}.

The operator $O_\text{OS}$ is a non-normal operator and, as figure \ref{fig:OSspec} demonstrates, $\omega_n$ are complex. Here though, the origin of non-normality is bulk dissipation rather than loss through a boundary, due to the bulk viscous term, $\nu \nabla^2\vec{u}$. Transient growth in the energy of perturbations $E_\phi(t)$ is well-established \cite{Reddy93}, and we reproduce the phenomena here in figure \ref{fig:OSEEE}.

\begin{figure}[h!]
\centering
\includegraphics[width=0.5\columnwidth]{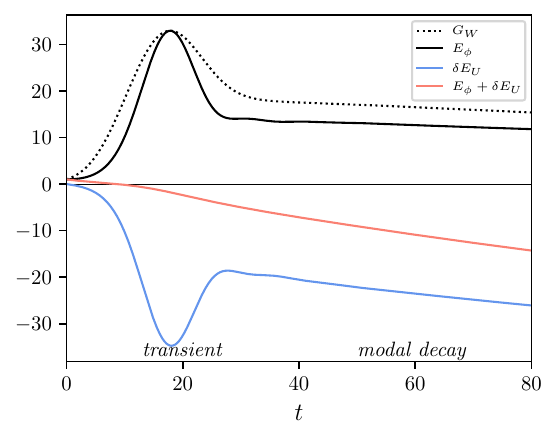}
\caption{$E_\phi(t)$, $\delta E_{U}(t)$ for an optimal linear perturbation to the plane-Poiseuille flow at $\alpha = 1$ and $Re = 5000$ for a subspace spanned by $M=30$ modes. The dotted line $G_W(t)$ gives a sharp upper bound on $E_\phi(t)$ for all possible initial data, and here initial data is chosen such that $E_\phi(t)$ reaches the maximum. The linear perturbation is characterised by an initial non-modal growth, even though the system is modally stable with all QNMs in the lower-half plane.}
\label{fig:OSEEE}
\end{figure}

The mechanism behind $E_\phi(t)$ growth is similar to black hole charge superradiance. Instead of pair creation of oppositely-charged particles in the presence of an electric field, in this case there is pair creation of oppositely-moving momentum modes in the presence of the background $x$-independent flow, $U(y) = 1-y^2$. $E_\phi(t)$ can then increase by non-modally borrowing energy from the mean flow energy, $E_{U}$. The net energy $E = E_\phi + E_{U}$ necessarily decreases due to viscous effects. This is shown in figure \ref{fig:OSEEE}. 

\begin{table}[h!]
\centering
\begin{tabular}{l|l}
\textbf{RN-AdS$_4$ black brane} & \textbf{Plane-Poiseulle flow}\\
\hline 
radial direction, $z$ & direction perpendicular to plates, $y$\\
$U(1)$ symmetry & translations in $x$\\
chemical potential deformation, $\mu$ & applied pressure gradient, $\partial_x P$\\
radial electric field, $E_z = 2 \mu$ & mean flow in $x$-direction, $U = 1-y^2$\\
non-normal Hamiltonian, $\mathcal{H}$ \eqref{hamchi} & non-normal Hamiltonian, $O_\text{OS}$ \eqref{OSeq}\\
charged scalar $\psi$ QNMs (fig. \ref{fig:spec}) & no-slip stream function perturbations $\Phi$ (fig. \ref{fig:OSspec})\\
superradiance (spontaneously broken $U(1)$) & turbulence (spontaneously broken $\partial_x$)\\
$\psi$ charge, $q$ & $\Phi$ wave number, $\alpha$\\
net energy loss through $\mathcal{H}^+$ & net energy loss via kinematic viscosity $\nu$ \eqref{dEviscous}
\end{tabular}
\caption{\label{tab:analogy} Summary of the coarse analogy between transient superradiance phenomena discussed in this work, and transient effects in perturbations to plane-Poiseulle flow.}
\end{table}

In this section we have drawn an analogy between plane-Poiseulle flow and the perturbations of charged black branes -- see table \ref{tab:analogy} for a summary. There are many differences between these systems, but the phenomenology is similar.  Part of this is because the mechanisms behind non-modal transients are similar, but also likely because of the connection between breaking $U(1)$ symmetry (in the case of charge superradiance) and breaking spacetime symmetries (in the case of turbulence and rotational superradiance) through dimensional reduction. It would be interesting to make this link precise within a concrete example.

\section{Method and further details} \label{sec:details}

\subsection{The bulk model}
We consider the four-dimensional Einstein-Maxwell action coupled to a complex scalar field, with charge $q$, given by

\be
S = \int d^4x \sqrt{-g}\left( R + 6 - \frac{1}{4}F^2 - |D\psi|^2 + 2|\psi|^2\right), \label{HHHaction}
\ee
corresponding to the holographic superconductor~\cite{Hartnoll:2008kx}. In \eqref{HHHaction}, $F=dA$, $D=\nabla - i q A$ and we have set $16 \pi G = 1$ and fixed the cosmological constant to be $\Lambda =-3$ (AdS radius $L=1$). The equations of motion associated with the above action are given by
\bea
\nabla_\mu F^{\mu\nu} &=& J^\nu\,,\nonumber\\
G_{\mu\nu} -3 g_{\mu\nu} &=& \frac{1}{2}T_{\mu\nu}\,,\nonumber\\
\left( D_\mu D^\mu + 2 \right ) \psi &=& 0\,,
\eea
where the conserved $U(1)$ current, $J^\mu$, and stress tensor, $T_{\mu\nu}$, are given by
\be
J^\mu = i q \left(\overline{\psi}D^\mu\psi - \psi \overline{D^\mu\psi} \right),
\ee
where $\overline\psi$ denotes the complex conjugate of $\psi$, and
\bea
\label{eq:StressTensor}
T_{\mu\nu} &=& T^\psi_{\mu\nu} + T^F_{\mu\nu}\,,\nonumber\\
T^\psi_{\mu\nu} &=& \overline{D_\mu\psi} D_\nu \psi +  D_\mu \psi\overline{D_\nu\psi} - g_{\mu\nu} |D\psi|^2 +g_{\mu\nu}(2|\psi|^2) \,,\nonumber\\
T^F_{\mu\nu} &=& F_{\mu\rho}F_{\nu}^{\phantom{\nu}\rho} - \frac{1}{4}g_{\mu\nu}F^2\,,
\eea
respectively, and they satisfy the local conservation equations
\bea\label{eq:conservation}
\nabla_\mu J^\mu &=& 0\,,\nonumber\\
\nabla_\mu T^{\mu\nu} &=& 0\,,\nonumber\\
\nabla_\mu (T_\psi)^{\mu}_{\phantom{\mu}\nu} &=& F_{\rho\nu}J^\rho\,,\nonumber\\
\nabla_\mu (T_F)^{\mu}_{\phantom{\mu}\nu} &=& -F_{\rho\nu}J^\rho\,.
\eea

The equations of motion then admit a unit-radius AdS$_4$ vacuum solution with $A =\psi = 0$, which is dual to a $d = 3$ CFT with an abelian global symmetry. In what follows, we are interested in placing the CFT at finite temperature $T$, with constant chemical potential $\mu$. The high temperature, spatially homogeneous and isotropic phase is described by the planar, electrically charged RN-AdS$_4$ black brane solution, which in Poincar\'e coordinates, $(t,r)$, takes the form
\bea
\label{eq:RN}
ds^2 &=& -f(r)dt^2 + \frac{dr^2}{f(r)} + r^2(dx_1^2 + dx_2^2),\nonumber\\
f(r) &=& r^2-\left(1 + \frac{\mu^2}{4}\right)\frac{1}{r} + \frac{\mu^2}{4r^2},\nonumber\\
A &=& \mu \left(1-\frac{1}{r}\right) dt\,, \quad \psi=0.
\eea
The black hole horizon is located at $r = 1$ in these coordinates\footnote{Throughout the paper, times and frequencies are given in units of $r_h=1$ which can be related to e.g. thermodynamic units by reinserting explicit factors of $r_h$ into \eqref{thermodynamics}.} and the associated thermodynamic quantities, namely the temperature $T$, entropy density $s$, charge density $\rho$, energy density $\epsilon$ and pressure $P$, are given by
\bea
T = \frac{3-\frac{\mu^2}{4}}{4\pi}, \quad s = 4\pi, \quad \rho = \mu, \quad \varepsilon = 2 + \frac{\mu^2}{2}, \quad P = 1 + \frac{\mu^2}{4}. \label{thermodynamics}
\eea

In this model the black brane \eqref{eq:RN} is unstable below some critical temperature~\cite{Denef:2009tp}, at any value of $q$. In the canonical ensemble the thermodynamically preferred black hole at low temperatures has non-vanishing charged scalar hair, describing a superfluid phase in the dual CFT. To obtain the critical temperature for this transition (or equivalently, critical chemical potential $\mu_c$), one looks for zero modes around \eqref{eq:RN}, that is, QNMs with $\omega = 0$.\footnote{Note here this requires a numerical solution, but for RN-AdS$_5$ the critical temperature can be determined analytically \cite{Arnaudo:2024sen}.} For later convenience, let us note that when $q=1$ we have $\mu_c\simeq 2.98$.

But of greater relevance for this work, another way to get the critical temperature is by considering (modal) energy growth. By studying linear perturbations of the system formed from a single QNM, it can be shown that energy can be removed from the system through superradiance if the frequency, $\omega$, of the corresponding QNM satisfies the condition
\be
\omega \in \left\{\tilde{\omega}\in\mathbb{C} \;\Big|\; \left(\re{\,\tilde{\omega}} + \frac{\mu q}{2}\right)^2 + (\im{\,\tilde{\omega}})^2 < \left(\frac{\mu q}{2}\right)^2\right\}. \label{superradcircle}
\ee
The critical temperature is then the highest one where a QNM satisfying \eqref{superradcircle} first exists. However, as we will see in section \ref{sec:positivity_thrm}, perturbations of the system composed of more than one QNMs allow for energy growth even above this critical temperature.

\subsection{Hyperboloidal coordinates}
The holographic superconductor is most commonly analysed using Poincar\'e coordinates $(t,r)$. In what follows we will instead use a hyperboloidal slicing: spacelike slices that pierce the future event horizon, rather than degenerate at the bifurcation point. The advantage of using this coordinate system is that it avoids the past horizon where QNMs are singular, and provides a way to track the amount of energy falling into the black hole over time.

Hyperboloidal coordinates \cite{Schmidt:1993rcx, Zenginoglu:2007jw, Dyatlov:2010hq, Bizon:2010mp,  Warnick:2013hba, PanossoMacedo:2018hab, Gajic:2019qdd, Bizon:2020qnd, PanossoMacedo:2023qzp} are obtained starting from Poincar\'e coordinates via the following coordinate transformation
\be
t=\tau-h(z)\,,\quad r=R(z)\,,
\ee
where the height function $h(z)$ bends the original Cauchy slice so that $\tau= \text{const.}$ corresponds to a hypersurface $\Sigma_\tau$ which penetrates the black hole horizon, and for convenience $R(z)$ is used to perform a spatial compactification. Here the new radial coordinate $z\in[0,1]$ is chosen such that $z = 1$ corresponds to the future horizon, while $z = 0$ corresponds to the conformal boundary of AdS. In this parametrisation, the metric and gauge field read
\bea
\label{eq:RNAdS4_hyperb}
ds^2&=&-\tilde{f}(z)\,d\tau^2+2\tilde{f}(z)h'(z)\,d\tau dz + \left (\frac{R'(z)^2}{\tilde{f}(z)} - \tilde{f}(z)h'(z)^2 \right )dz^2 + R(z)^2 d\Vec{x}^{\,2},\nonumber \\
A &=& \mu\left(1-{R(z)}^{-1}\right)\,d\tau - \mu\left(1-{R(z)}^{-1}\right) h'(z)\,dz\,, \quad  \quad\tilde{f}(z)=f[R(z)]. 
\eea
Our choice of coordinates is given by
\bea 
h(z)& =& \frac{1}{2\kappa}\log (1-z) - \frac{z_c^2}{2\kappa_c}\left(\log(z_c-z)-\log(z_c)\right)\,,\label{hchoice}\\
R(z)&=&1/z\,,
\eea
chosen so that the mode is ingoing at the Cauchy horizon, while ensuring that $h(0) = 0$ so that the asymptotic time is not adjusted. Here $z_c$ is the location of the Cauchy horizon, and $\kappa$, $\kappa_c$ are the surface gravities at the horizon and Cauchy horizon respectively
\bea
\kappa &=& \frac{12-\mu^2}{8},\nonumber\\
\kappa_c &=& \frac{2 z_c^4 \mu^2 -z_c^3 \mu^2 - 4z_c^3-8}{8 z_c},\\
z_c &=& { \frac{2 \sqrt[3]{54 \mu ^4+6 \left(\sqrt{3} \sqrt{27 \mu ^4+56 \mu ^2+48}+12\right) \mu ^2+64}+\frac{4\ 2^{2/3} \left(3 \mu ^2+4\right)}{\sqrt[3]{27 \mu ^4+3 \left(\sqrt{3} \sqrt{27 \mu ^4+56 \mu ^2+48}+12\right) \mu ^2+32}}+8}{6 \mu ^2}.}\nonumber
\eea
Demanding that $\Sigma_\tau$ intersects the future event horizon only requires the first term in \eqref{hchoice}. However, when approaching low temperatures, the influence of the Cauchy horizon increases and we require subsequent terms $\sim \log(z_c-z)$ to maintain a good spacelike slice. 
These are similar to the slices used in~\cite{Destounis:2021lum} which also contain such terms, and reduce to those used in \cite{Carballo:2024kbk} for Schwarzschild-AdS when $\mu \to 0$.

If we linearise the scalar field $\psi$ around the RN-AdS$_4$ background,
\be
\psi = \psi^{(1)}\epsilon + O(\epsilon)^2 \label{scalarlinear}
\ee
and decompose into plane waves as follows,
\be
\psi^{(1)}(\tau,z,\vec{x})=\int\frac{d^2\vec{k}}{(2\pi)^2}z^2\chi_{\vec{k}}(\tau,z)e^{i\vec{k}\cdot\vec{x}}\,, \label{scalarplanewave}
\ee
then the equation of motion for $\chi_{\vec{k}}$ is given by
\be
\label{eq:Hamiltonian}
i\partial_\tau \begin{pmatrix} 
   \chi_{\vec{k}} \\
     \partial_\tau \chi_{\vec{k}} 
    \end{pmatrix}=\mathcal{H}\,\begin{pmatrix} 
   \chi_{\vec{k}} \\
     \partial_\tau \chi_{\vec{k}} 
    \end{pmatrix}\,,\quad \mathcal{H}=\begin{pmatrix}
0 & i\\
\mathcal{L}_1& \mathcal{L}_2
\end{pmatrix}
\ee
where 
\bea
\mathcal{L}_1&=&\frac{-i}{-1+z^4\tilde{f}^2h'^2}\left[q^2 (-1 + z)^2 \mu^2 + 2 z^2 \tilde{f}^2 +\tilde{f}(2-\vec{k}^2z^2+2z^3\tilde{f}')+z^3 \tilde{f}\(z\tilde{f}'\p_z+\tilde{f}(4\p_z+z\p_{zz})\)\right]\,,\nonumber\\
\mathcal{L}_2&=&\frac{-i}{-1+z^4\tilde{f}^2h'^2}\left[-2iq(-1+z)\mu +z^4\tilde{f}\tilde{f}'h'+z^3\tilde{f}^2(4h'+zh'')+2z^4\tilde{f}^2 h'\partial_z\right]\,.
\eea
Note that, for convenience, in the decomposition above we have also removed a factor of $z^{\Delta}$, with $\Delta=2$ corresponding to the scaling dimension of the operator holographically dual to $\psi$ -- this ensures that regularity of $\chi_{\vec{k}}$ enforces the required normalisable behaviour at $z = 0$.

\subsection{Energy and charge}
The main observable of interest is the energy of the scalar field on a spacelike slice labelled by $\tau$, $\Sigma_\tau$, given by
\be
E_\psi \equiv \int_{\Sigma_\tau}(T_\psi)^{\mu}_{\phantom{z}\tau}n_{\mu}d\Sigma_\tau, \label{Epsidef}
\ee
where $n=\frac{-1}{\sqrt{-g^{\tau\tau}}}d\tau$ is the unit, future-directed normal to $\Sigma_\tau$.
This is conserved up to flux through the future horizon and through exchange with the gauge field, since,
\be
\partial_\tau E_\psi = \int d^2\vec{x} (T_\psi)^{z}_{\phantom{\tau}\tau}\big|_{z=1} - \int \sqrt{-g}F_{z \tau}J^z \, dz d^2\vec{x}.\label{Epsicons}\
\ee
On the other hand, the total energy $E$, as well as the charge $Q$, 
\bea
E &\equiv & \int_{\Sigma_\tau}T^{\mu}_{\phantom{z}\tau}n_{\mu}\,d\Sigma_\tau,\label{Edef}\\
Q &\equiv & \int_{\Sigma_\tau} J^{\mu} n_\mu\, d\Sigma_\tau,\label{Qdef}
\eea
are both conserved up to only boundary terms. In the special case of the probe limit -- where $q\to\infty$ so that backreaction is parameterically suppressed -- the additional energy comes only from the gauge field, and we denote this contribution by $E_F \equiv E - E_\psi$.

To proceed we linearise the scalar as in \eqref{scalarlinear} and evaluate $E_\psi$, $Q$ to $O(\epsilon)^2$. Since $T^\psi_{\mu\nu}$ and $J^\mu$ are quadratic in $\psi$, scalar perturbations to $O(\epsilon)$ are sufficient. With a plane wave decomposition of $\psi^{(1)}$ \eqref{scalarplanewave}, we have the following scalar contribution to energy,
\bea
\label{eq:Epsi}
E_{\psi}= \epsilon^2\int\frac{d^2\vec{k}}{(2\pi)^2}\int_0^1 dz \,&\big[&  w_E(z) \p_\tau \overline\chi_{\vec{k}} \p_\tau \chi_{\vec{k}} + p_E(z) \p_z \overline \chi_{\vec{k}} \p_z \chi_{\vec{k}} + q_E(z) \overline \chi_{\vec{k}} \chi_{\vec{k}}  \label{energyintegral}\\ 
&+& \left ( \alpha_1(z) \overline\chi_{\vec{k}} \p_\tau \chi_{\vec{k}} + \text{c.c}\right ) + \left ( \alpha_2(z) \overline\chi_{\vec{k}} \p_z \chi_{\vec{k}} + \text{c.c}\right )\big] + O(\epsilon)^4,\nonumber
\eea
and charge,
\be
Q =\epsilon^2\int \frac{d^2\Vec{k}}{(2\pi)^2}\int_{0}^{1} \left[ \left(w_Q(z) \overline\chi_{\vec{k}}\p_\tau\chi_{\vec{k}} + p_Q(z) \overline\chi_{\vec{k}}\p_z\chi_{\vec{k}} + q_Q(z) \overline\chi_{\vec{k}}\chi_{\vec{k}} \right)+ \text{c.c.} \right] dz + O(\epsilon)^4,\label{chargeintegral}
\ee
where the coefficient functions $w_E(z)$, $p_E(z)$, $q_E(z)$, $\alpha_1(z)$, $\alpha_2(z)$, $w_Q(z)$, $p_Q(z)$, $q_Q(z)$ are given in appendix \ref{app:wpq}. In what follows we will refer to $E_{\psi}$ and $Q$ as the above $O(\epsilon)^2$ pieces, with the formal parameter set to $\epsilon = 1$. 
On the other hand, $E$   begins at order $\epsilon^0$
\bea
E = \frac{\mu^2}{2}\text{vol}_2 + O(\epsilon)^2,
\eea
receiving corrections due to both the gauge field and metric at order $\epsilon^2$. In the probe limit in particular, one has $E_F \equiv E - E_\psi = \int_{\Sigma_\tau}(T_F)^{\mu}_{\phantom{z}\tau}n_{\mu}d\Sigma_\tau = \frac{\mu^2}{2}\text{vol}_2 + E_F^{(2)}\epsilon^2 + O(\epsilon)^4$.  

Finally, we consider the contribution to $E_\psi$ coming from a single $\vec{k}$-mode, $\chi_{\vec{k}}(\tau,z)$, and define the associated energy inner product,\footnote{From now on, we drop the $\vec{k}$ label: $\chi(\tau,z)$.}
\be
\label{eq:innerprod}
\langle \xi_1, \xi_2 \rangle \equiv \int_0^1 dz \,\, \xi_{1}^{*} \cdot \mathcal{G}\cdot \xi_2\,,
\ee
where we have introduced the notation $\xi(\tau,z) \equiv \begin{pmatrix} \chi(\tau,z)\\ \partial_\tau \chi(\tau,z) \end{pmatrix}$ -- here and throughout the text, $^*$ denotes the conjugate transpose -- and
\be
\mathcal{G}=\begin{pmatrix}
\overset{\leftarrow}{\p_z} \,p_E(z)\,\overset{\rightarrow}{\p_z}+q_E(z)+\alpha_2(z)\,\overset{\rightarrow}{\p_z}+\overset{\leftarrow}{\p_z}\,\overline \alpha_2(z) & \;\alpha_1(z)\\
\overline \alpha_1(z) &   \;w_E(z) 
\end{pmatrix},
\ee
such that $E_\psi[\xi] = \langle \xi,\xi\rangle$.

\begin{figure}[h!]
\centering
\includegraphics[width=0.5\columnwidth]{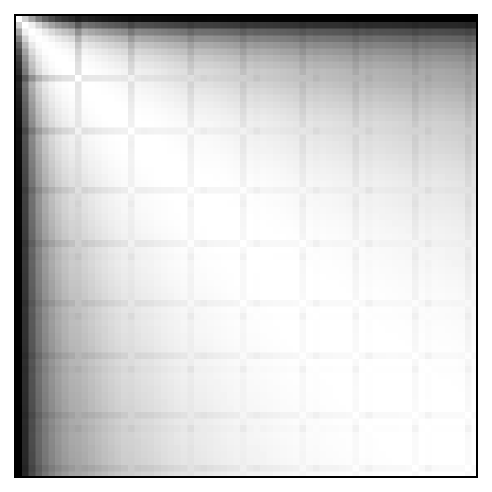}
\caption{Overlap $|\langle \tilde{\xi}_i, \tilde{\xi}_j \rangle|$  for QNMs $i$ and $j$, when ordered by their imaginary part, starting from the longest-lived mode at $i=j=1$ in the top left of the diagram. For diagonal elements, $i=j$, the overlap satisfies $\langle \tilde{\xi}_{i}, \tilde{\xi}_{j} \rangle=1$ (white). For off-diagonal elements, $i\ne j$,  $0<|\langle \tilde{\xi}_{i}, \tilde{\xi}_{j} \rangle|<1$ (grey, with lighter shades corresponding to larger values). The choice of parameters used is $q=1$, $\mu=2.9$, $\vec{k}=0$.}
\label{fig:array}
\end{figure}

Under the energy inner product \eqref{eq:innerprod}, QNMs are not orthogonal to one another. In particular, this means that the energy of a sum of QNMs is not the sum of their energies, and the additional terms allow for transient effects in the time evolution $E_\psi$. Figure \ref{fig:array} quantifies the overlap between pairs of QNMs for the spectrum shown in figure \ref{fig:spec}. Lighter colour corresponds to a stronger overlap. Specifically, for any two QNMs, labelled by $i$ and $j$, the figure shows the value of $|\langle \tilde{\xi}_i, \tilde{\xi}_j \rangle|$, where $\tilde{\xi}$ are the normalised eigenfunctions
\be
\tilde{\xi} = \frac{\xi(z)}{\sqrt{\left<\xi(z),\xi(z)\right>}}.
\ee
The visible grid-like structure corresponds to a lower overlap between different branches in the spectrum.

\subsection{Optimal perturbations and scalar energy growth curve} \label{sec:optimal}

In this subsection we briefly review the algorithm for calculating optimal perturbations -- those perturbations which maximise energy growth $E_\psi[\xi(\tau,z)]/E_\psi[\xi(0,z)]$ at a fiducial finite time $\tau$ (see~\cite{Carballo:2024kbk} for more details), and then apply it to our model. The algorithm is as follows:
\begin{enumerate}
\item Construct QNMs corresponding to perturbations of the charged scalar field. Specifically, we denote the eigenfunctions as $\xi_n(z)$ and eigenvalues $\omega_n$, with $n=1,2,\dots$ in order of decreasing imaginary part. Here, this step is performed numerically.
\item Select a finite set of QNMs consisting of the first $M$ modes, $\{\xi_n\}_{n=1}^M$. Let us denote the subspace of linear perturbations spanned by these QNMs as $W$, such that $\dim(W)=M$. 
\item Use the scalar field energy norm to  normalise the corresponding eigenfunctions 
\be
\tilde\xi_n=\frac{\xi_n(z)}{\sqrt{\langle \xi_n(z),\xi_n(z)\rangle}}\,.
\ee
\item Use the Gram-Schmidt method to construct an orthonormal set of functions $\{\zeta_n\}_{n=1}^M$ satisfying $\langle \zeta_i,\zeta_j\rangle=\delta_{ij}$. This process will also yield the change of basis matrix $U_W$ that corresponds to the projections between these two sets of functions, i.e
\be
U_W=
\begin{pmatrix}
\langle \zeta_1,\tilde\xi_1\rangle & \langle \zeta_1,\tilde\xi_2\rangle & \dots& \langle \zeta_1,\tilde\xi_M\rangle\\
0 & \langle \zeta_2,\tilde\xi_2\rangle & \dots& \langle \zeta_2,\tilde\xi_M\rangle\\
\vdots & \vdots & \ddots& \vdots\\
0 & 0 & \dots& \langle \zeta_M,\tilde\xi_M\rangle\\
\end{pmatrix}.
\ee
\item Obtain the matrix $H_W=U_W D_W U_W^{-1}$, where $D_W = \text{diag}(\omega_1, \omega_2, \dots, \omega_M )$.
\item Carry out a singular value decomposition for the $W$-projected time evolution operator, $e^{-i H_W\tau}$. Its maximum singular value squared computes the energy growth curve in $W$ \cite{Carballo:2024kbk}
\be
\label{eq:GW}
G_W(\tau)=\sup_{\xi(0,z)\in W}\frac{E_\psi[\xi(\tau,z)]}{E_\psi[\xi(0,z)]}=\|e^{-i H_W \tau}\|^2_2\,,
\ee
which captures the maximum possible $E_\psi$ at any given time $\tau\ge0$ within $W$. Here $\|\cdot\|_2$ denotes the usual $l^2$-norm induced from the Euclidean inner product $\langle \Vec{e}_1, \Vec{e}_2 \rangle_2 = \Vec{e}_{1}^{\,\,*}\Vec{e}_{2}$. On the other hand, its right principal singular vector gives a set of coefficients $\vec{d}$ such that
\be
\label{eq:expansion}
\xi(0,z)=\sum_{n=1}^M c_n \tilde\xi_n=\sum_{n=1}^M d_n \zeta_n
\ee
corresponds to optimal initial data maximising the energy, $E_\psi$ at time $\tau$. It should be clarified that the energy growth $G_W(\tau)$ is not the time evolution of the energy of a perturbation, $E_\psi[\xi(\tau,z)]$. Rather, there exists initial data $\xi(0,z) \in W$ for each $\tau\ge0$ such that $G_W(\tau)=E_\psi[\xi(\tau,z)]$. 
\end{enumerate}

The numerical implementation of the above algorithm makes use of Chebyshev spectral methods. The inner product is also discretised into $\langle \xi_1,\xi_2\rangle=\vec{\xi}_1^*\, G\,\vec{\xi}_2= (F \vec{\xi}_1)^*(F \vec{\xi}_2)$, where $G$ is a $2(N + 1) \times 2(N + 1)$ matrix and $F$ its Cholesky decomposition. Note that in the computation of $G$ one needs to include the quadrature weights coming from integrating over a Chebyshev grid. In addition, in order to minimise loss of accuracy resulting from approximating the integrand in \eqref{eq:innerprod} as a single Chebyshev expansion, we first compute $G$ on a grid of double resolution $2(N+1)$ and then interpolate down to $N+1$ grid points. Throughout the paper, unless otherwise stated, all numerical results are derived at $N=450$ and using $200$ digits of precision.

\begin{figure}[h!]
\centering
\includegraphics[width=0.5\columnwidth]{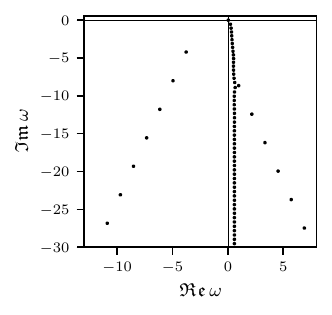}
\caption{The $\vec{k}=0$ QNM spectrum for $q=1$ charged scalar perturbations of the RN-AdS$_4$ black brane at $\mu=2.9$ where the system is modally stable ($\mu_c\simeq 2.98$ at $q=1$).}
\label{fig:spec}
\end{figure}

\begin{figure}[h!]
\centering
\includegraphics[width=1.0\columnwidth]{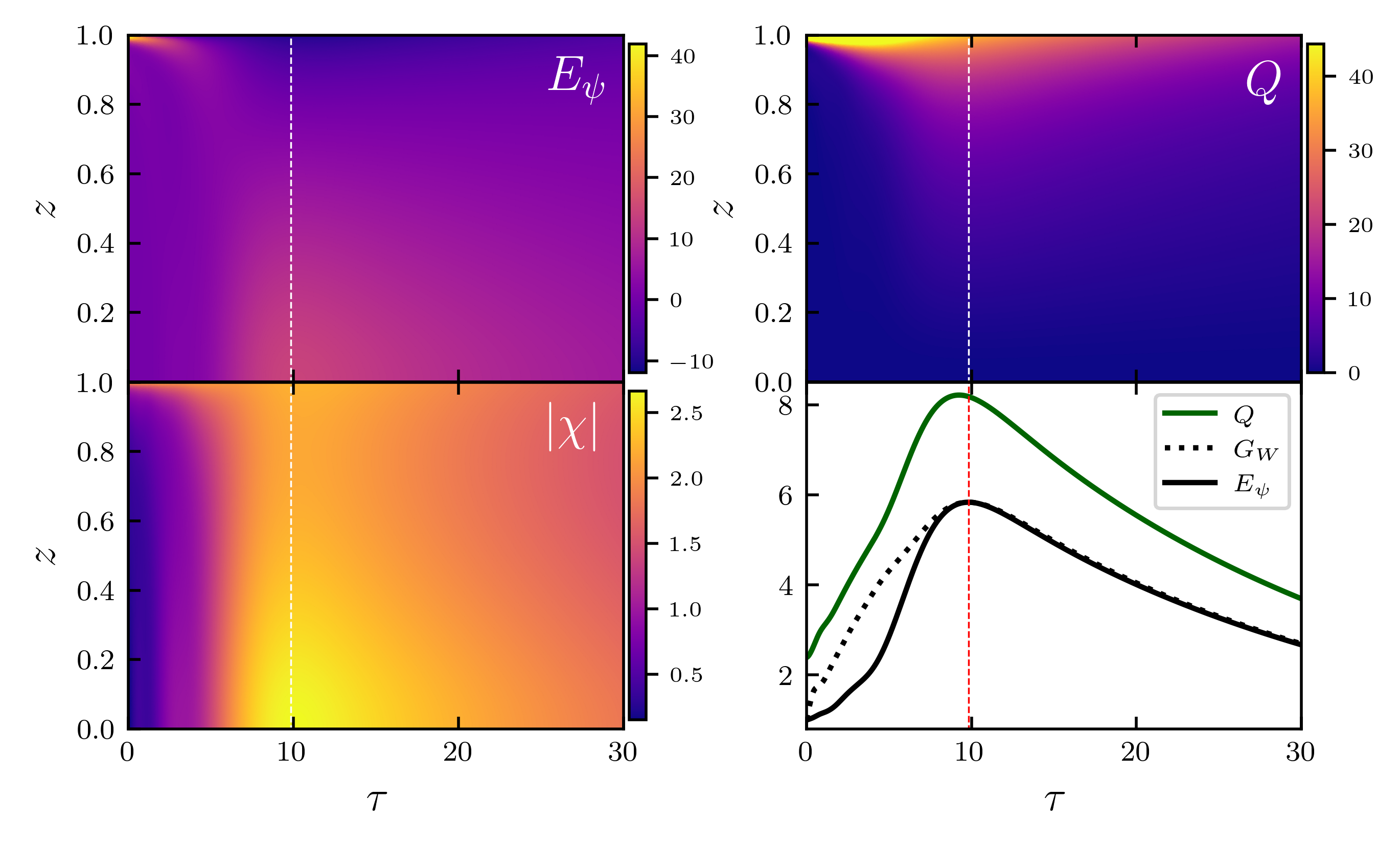}
\caption{Optimal perturbation in the subcritical regime, demonstrating a period of transient growth despite modal stability.  $E_\psi, |\chi|, Q$ densities as functions of $\tau$ and $z$, showing that the scalar field is localised close to the horizon. The maximum displayed value of the $Q$ density is $44.3$ for visualisation purposes. Integrating these on the hyperboloidal slice $\Sigma_\tau$, the last panel also shows the time evolution of 
$E_\psi,Q$ given optima, clearly demonstrating transient growth for a period of time, before it eventually decays modally. The dotted line $G_W(\tau)$ gives a sharp upper bound on $E_\psi$ for all possible initial data. The choice of parameters is $q=1$, $\mu=2.9$, $\vec{k}=0$ and $\tau_\ast=9.83$.}
\label{fig:OptimalPert}
\end{figure}

Let us now discuss our results. We focus on parameters $q=1, \mu=2.9$ and $\vec{k}=0$, corresponding to a subcritical region of the phase diagram ($\mu<\mu_c$) where all QNMs are decaying in time. The QNM spectrum for this choice of parameters is shown in figure \ref{fig:spec}. In this subcritical region, we can define the \emph{growth factor} in $W$ as the maximum possible $E_\psi[\xi(\tau,z)]$ over all $\tau$ for any choice of $\xi(0,z) \in W$. By definition of $G_W(\tau)$ \eqref{eq:GW}, the growth factor is simply $G_W(\tau_{\text{peak}})\equiv\sup_{\tau\geq0}G_W(\tau)$. Figure \ref{fig:OptimalPert} shows the optimal perturbation constructed out of $M=10$ modes for the subcritical holographic superconductor, chosen to maximise the energy growth at a time $\tau_\ast=\tau_{\text{peak}}=9.83$. With these parameters, there is a growth factor of $G_W(\tau_{\text{peak}})\simeq 5.83$, while soon after $\tau_\ast$ there is a transition to modal decay.

\begin{figure}[h!]
\centering
\includegraphics[width=0.6\columnwidth]{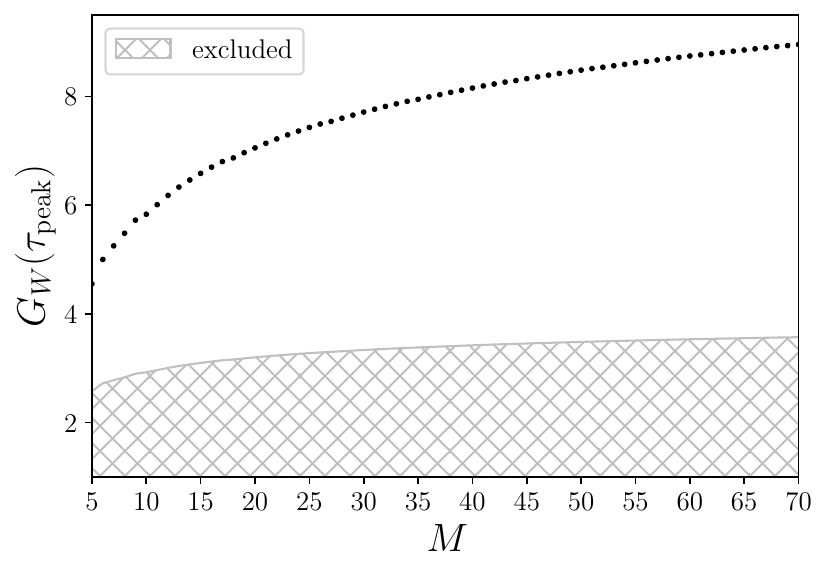}
\caption{Growth factor as a function of $M$. The choice of parameters is $q=1$, $\mu=2.9$, $\vec{k}=0$.}
\label{fig:GrowthFactor}
\end{figure}

The growth factor $G_W(\tau_{\text{peak}})$ does not however appear to be bounded with the number of modes. This is illustrated in figure \ref{fig:GrowthFactor} which shows the scalar energy growth factor, $G_W(\tau_{\text{peak}})$, as a function of $M=\dim(W)$. This may be an important consideration for the seeding of nonlinear effects in the continuum theory.\footnote{Note that in the Orr-Sommerfeld case the growth factor saturates at around $M\simeq 60$.}

\begin{figure}[h!]
\centering
\includegraphics[width=0.7\columnwidth]{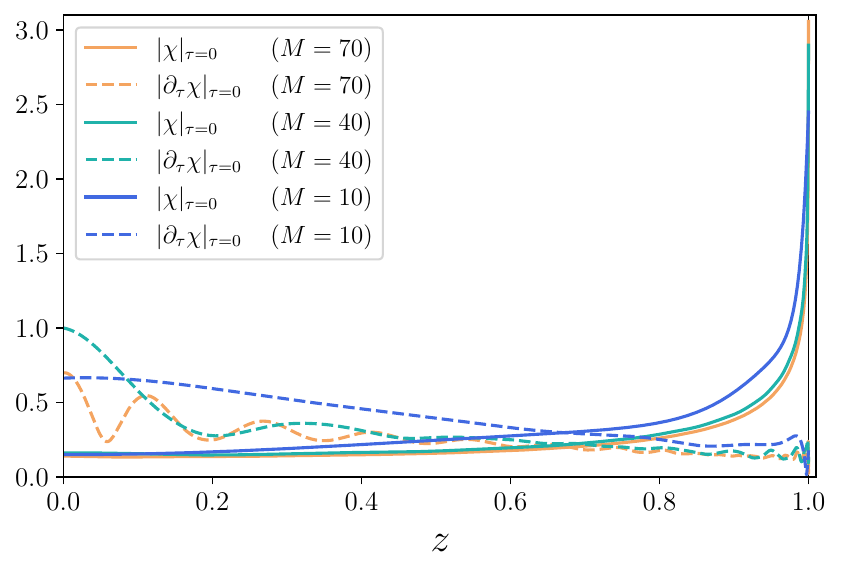}
\caption{The initial data corresponding to optimal perturbations of the RN-AdS$_4$ black brane, formed from a sum of the leading $M$ QNMs with $\vec{k}=0$. Each example of initial data shown has $E_\psi = 1$, which then increases transiently. Optimal perturbations are those which reach the maximum possible energy during this transient period. The choice of parameters is $q=1$, $\mu=2.9$.}
\label{fig:initdata}
\end{figure}

The corresponding optimal initial data is localised near the horizon, see figure \ref{fig:initdata}. This then expands to fill the spacetime before decaying via QNMs, as illustrated in the density plots of figure \ref{fig:OptimalPert}.\footnote{The energy and charge densities are computed by evaluating the integrands as shown in \eqref{Epsidef} and \eqref{Qdef}; note that the energy density differs from the integrand of \eqref{energyintegral} by total derivative terms.} There is an oscillatory behaviour in $|\p_\tau\chi|_{\tau=0}$ close to the horizon for larger $M$. This has an imprint on the densities $E_\psi, |\chi|, Q$: portions of these densities spread out of the horizon over slightly different time scales. On a figure similar to figure \ref{fig:OptimalPert}, this appears as a number of `beams' of density emanating from the horizon.

\begin{figure}[h!]
\centering
\includegraphics[width=0.7\columnwidth]{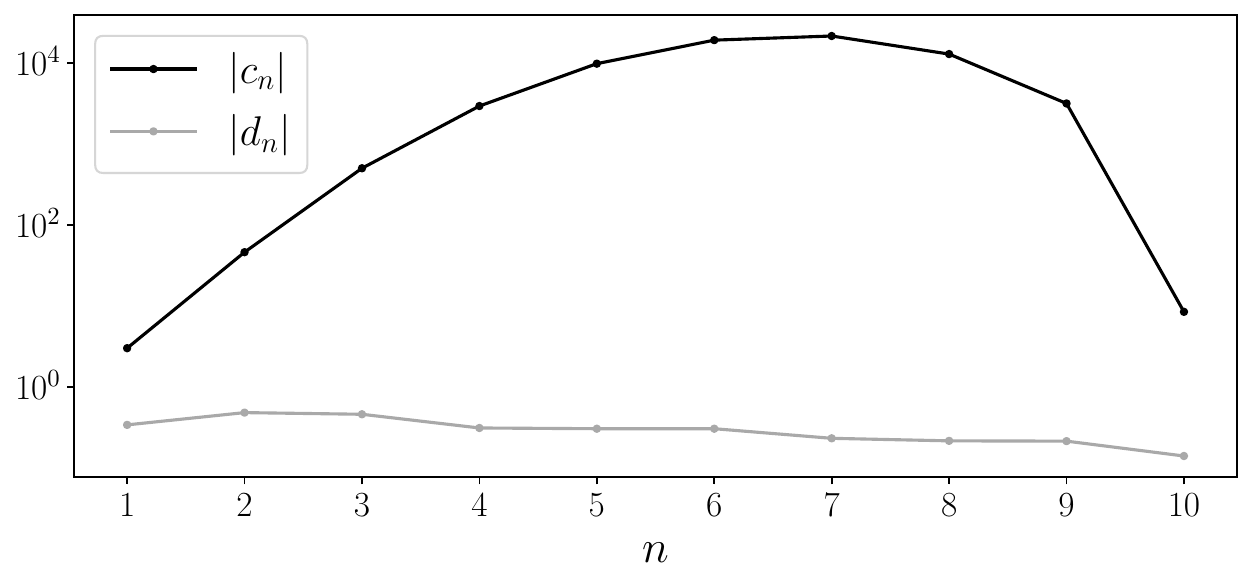}
\caption{The QNM coefficients, $c_n$, (black) in the optimal perturbation shown in figure \ref{fig:OptimalPert}, at the initial data surface $\tau = 0$ \eqref{eq:expansion}. Also shown are the coefficients in the orthonormal basis for $W$, $d_n$~(grey). Note $\sqrt{\vec{d}^{\,*}\vec{d}}=1$ and $\sqrt{\vec{c}^{\,*}\vec{c}}\simeq10^4$. The choice of parameters is $q=1$, $\mu=2.9$, $\vec{k}=0$ and $\tau_*=9.83$.}
\label{fig:OptimalPert_coeffs}
\end{figure}

Finally, the magnitude of the coefficients in the decomposition of the optimal perturbation is shown in figure \ref{fig:OptimalPert_coeffs}. While the coefficients in the orthonormal basis for $W$ are order one (the $d_n$), the coefficients in a sum of QNMs are large (the $c_n$).

\begin{figure}[h!]
\centering
\includegraphics[width=0.5\columnwidth]{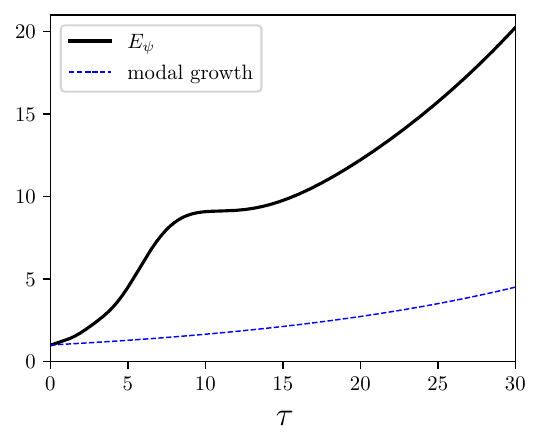}
\caption{Optimal perturbation, $E_\psi$, in the supercritical regime (black solid line). This demonstrates a sum of QNMs with a growth rate much faster than the single unstable QNM in the spectrum (blue dashed line). The choice of parameters is  $\mu= 3.1>\mu_c$, $q = 1$, $\vec{k}= 0$, $M=10$ and $\tau_\ast=7.5$.}
\label{fig:optimal_unstable}
\end{figure}

One can repeat the investigation in this section but at supercritical temperatures. For such values of $\mu,q$, the spectrum includes an exponentially growing mode. However, we find that non-modal optimal perturbations grow much faster than the QNM growth. Our results for the optimal perturbation in this case are shown in figure \ref{fig:optimal_unstable} for $\mu= 3.1>\mu_c$, $q = 1$, $\vec{k}= 0$, $M=10$ with target time $\tau_\ast = 7.5$.

\subsection{Truncated-Hamiltonian pseudospectrum}
We now introduce the pseudospectrum \cite{TrefethenEmbree2005} for a truncated Hamiltonian $\mathcal{H}_W$, $\sigma_{\epsilon}(\mathcal{H}_W)$. This is defined as
\be
\sigma_{\epsilon}(\mathcal{H}_W) = \{\omega \in \mathbb{C} \;\big|\; \|R(\omega;\mathcal{H}_W)\|_{E_\psi} \geq \epsilon^{-1}\}\,, \label{eq:pseudo_W}
\ee
where $R(\omega;\mathcal{H}_W)\equiv (\omega-\mathcal{H}_W)^{-1}$ is the resolvent, $\mathcal{H}_W$ is Hamiltonian for the subspace $W$ and the size of the perturbation $\epsilon$ is measured with respect to the energy norm $\|\cdot\|_{E_\psi}$. As we show explicitly in appendix \ref{app:pseudoW}
\be
\|R(\omega;\mathcal{H}_W)\|_{E_\psi} = \|R(\omega;H_W)\|_2\,, \label{eq:pseudoW}
\ee
where $H_W$ is defined in section \ref{sec:optimal}. $H_W$ is a finite dimensional matrix and is computed by knowing only a subset of the spectrum of $\mathcal{H}$, and thus \eqref{eq:pseudo_W} has the advantage of being a well defined quantity. This is in contrast with $\sigma_{\epsilon}(\mathcal{H})$, which when computed through numerical approximations do not converge to a continuum value in some cases~\cite{Jaramillo:2020tuu}. 

In figure \ref{fig:pspecW}~(left) we plot the sets $\sigma_\epsilon(\mathcal{H}_W)$ for several $\epsilon$, with $W$ containing only the first $M=10$ QNMs. This exhibits the usual characteristics anticipated for spectrally unstable systems. In addition to this, for our purposes, the contours protrude significantly enough into the upper half plane to imply transient growth through the Kreiss Matrix Theorem. This theorem arises by writing the resolvent as a Laplace transform of the time evolution operator, $i(\omega-\mathcal{H})^{-1}=\int_0^\infty e^{i\omega\tau}e^{-i\mathcal{H}\tau}d\tau$,\footnote{Convergent if $\im\,\omega>\im\,\lambda$ for any eigenvalue $\lambda$ of $\mathcal{H}$.} whose norm can then be straightforwardly bounded,\footnote{If $\im\,\omega>0\geq \im\,\lambda$ for any eigenvalue $\lambda$ of $\mathcal{H}$.}
\bea
\bigg\|\int_0^\infty e^{i\omega\tau}e^{-i\mathcal{H}\tau}d\tau\bigg\|&\leq& \int_0^\infty |e^{i\omega\tau}|\|e^{-i\mathcal{H}\tau}\|d\tau \nonumber\\
&\leq& \sup_{\tau\geq0}\|e^{-i\mathcal{H}\tau}\|\int_0^\infty e^{-\im\,\omega\tau}\,d\tau \nonumber\\
&=& \frac{1}{\im\,\omega}\sup_{\tau\geq0}\|e^{-i\mathcal{H}\tau}\| \label{eq:KMTbound}
\eea
yielding a result that relates the pseudospectrum contours to growth in time,
\be
\mathcal{K}(\mathcal{H}) \equiv \sup_{\im \,\omega >0}(\im\,\omega) \|R(\omega;\mathcal{H})\|\leq\sup_{\tau\geq0}\|e^{-i\mathcal{H}\tau}\|, \label{KMTuntruncated}
\ee
where $\mathcal{K}(\mathcal{H})$ is the Kreiss constant. Specialised to our truncated system and the energy norm, \eqref{KMTuntruncated} implies
\be
\mathcal{K}^2(\mathcal{H}_W) \leq  \sup_{\tau \geq 0}G_W(\tau), \label{KMT}
\ee
where
\be
\mathcal{K}(\mathcal{H}_W) = \sup_{\im \,\omega >0}(\im\,\omega) \|R(\omega;\mathcal{H}_W)\|_{E_\psi}= \sup_{\im \,\omega >0}(\im\,\omega) \|R(\omega;H_W)\|_2. \label{Kconstval}
\ee

Thus if $\mathcal{K}(\mathcal{H}_W) > 1$ there exist perturbations that exhibit energy growth. Focusing on $\re\,\omega=0.03$ (approximately the same real part as the fundamental QNM frequency) and moving upwards into the upper half plane, we track the value of $\im\,\omega\|R(\omega;H_W)\|_2$ against $\im\,\omega$; this is shown in figure \ref{fig:pspecW}~(right). The maximum of this curve then gives a lower bound on $\mathcal{K}(\mathcal{H}_W)$ through \eqref{Kconstval}.\footnote{The Kreiss constant is computed at the points of maximal protrusion in the upper half plane. Here, we instead obtain a bound by focusing on a fixed value of $\re\,\omega$.} Notably $\mathcal{K}(\mathcal{H}_W) > 1$ indeed, in agreement with the observation of transient growth through the construction of optimal perturbations in section \ref{sec:optimal}. The shaded area on figure~\ref{fig:GrowthFactor} denotes the region excluded  by the Kreiss Matrix Theorem~\eqref{KMT} for various Hamiltonian truncations with $\dim{W} = M$.

\begin{figure}[h!]
\centering
\includegraphics[width=0.5\columnwidth]{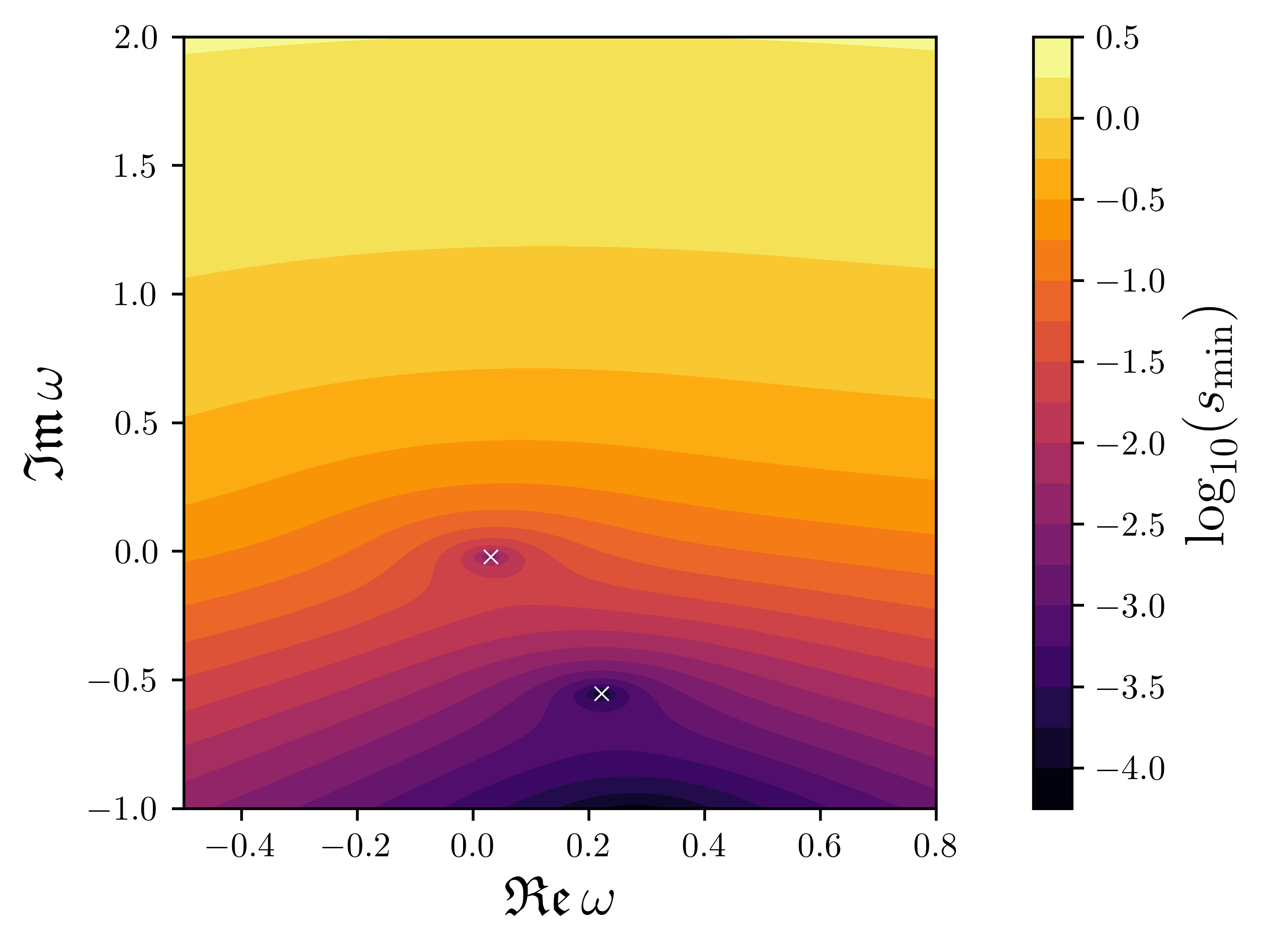}\qquad\includegraphics[width=0.43\columnwidth]{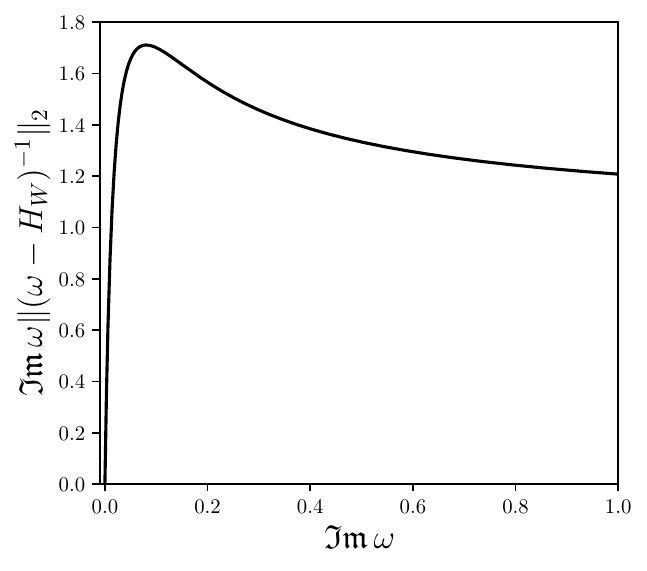}
\caption{\textbf{Left:} Pseudospectrum for the truncated Hamiltonian. Here, $s_{\min}(\omega-H_W)=\|(\omega-H_W)^{-1}\|^{-1}_{2}$ and we make use of the equivalent definition of the pseudospectrum $\sigma_{\epsilon}(\mathcal{H}_W) = \{\omega \in \mathbb{C} \;\big|\; s_{\text{min}}(\omega - H_W) \leq \epsilon \}$. Note the protrusion of pseudospectral contour lines in the upper half plane. \textbf{Right:} $\im \,\omega \|(\omega-H_W)^{-1}\|_2$ as a function of $\im\, \omega$ in the upper half plane for $\re\,\omega=0.03$ (approximately the real part of the frequency of the lowest lying QNM). This gives an indication of the depth of the protrusion of the pseudospectral contour line in the upper half plane, with the peak on this plot setting a lower bound for the Kreiss constant. The choice of parameters is $q=1$, $\mu=2.9$, $\vec{k}=0$ and $M=10$.}
\label{fig:pspecW}
\end{figure}

\section{A positivity theorem for QNM energies}\label{sec:positivity_thrm}
In the previous section, we have demonstrated that charged linear scalar perturbations on the RN-AdS$_4$ background can exhibit transient amplification in their energy via superradiance in the modally stable regime. This arises due to non-modal effects in the time evolution of sums of QNMs on a $\tau=0$ slice, ultimately coming from the non-normality of the system.\footnote{We have also checked that transient superradiance can still take place for generic choices of initial data. We discuss this matter further in section \ref{discussion}.} This is in contrast with the usual picture of superradiance in which individual scattered plane waves contain higher energy than the incident wave, either from infinity in asymptotically flat space \cite{PhysRevD.7.949} (see \cite{Brito:2015oca} for a review), or from sources at the conformal boundary of AdS \cite{Ishii:2022lwc}.

In this section we derive under which conditions QNMs exhibit (transient) superradiant amplification in our system, and, as a result, we arrive at a theorem on the positivity of QNM energies.

Consider the flux of the scalar field energy \eqref{Epsicons}. Working at linear level in $\psi$ translates into working at $\mathcal{O}(\epsilon)^2$ in \eqref{Epsicons} such that
\bea
\p_\tau E_\psi &=& -2 \int d^2\vec{x} |\p_\tau\psi|^2\bigg|_{z=1} + \mu \int  dz d^2\vec{x} \sqrt{-g} J^z \nonumber\\
&=& -2 \int d^2\vec{x} |\p_\tau\psi|^2\bigg|_{z=1} + \mu \int d^2\vec{x} \,\p_\tau Q_\psi, \label{eq:Epsi_flux}
\eea
where we have used \eqref{eq:StressTensor}, $F_{z\tau}=-\mu+\mathcal{O}(\epsilon)^2$ in the first line, and $\nabla_\mu J^\mu=0$ along with integration by parts in the second. Thus, upon plane wave decomposition \eqref{scalarplanewave}, considering a single $\vec{k}$-mode contribution (dropping the associated label) and for a single QNM with frequency $\omega$, $\chi(\tau,z)=e^{-i\omega\tau}Z(z)$, one has
\be
\label{eq:flux_1QNM}
\p_\tau E_{\psi}^{\text{\tiny QNM}}=-2e^{2 \im\,\omega\tau}\(|\omega|^2 + \mu q \re\,\omega\)|Z(1)|^2,
\ee
naturally giving rise to a condition that needs to be satisfied by $\omega$ for a single QNM to exhibit superradiance. Namely,
\be
\label{eq:superrCond}
\forall \tau \in \mathbb{R}:\p_\tau E_{\psi}^{\text{\tiny QNM}} >0 \,\Leftrightarrow \,\omega \in \left\{\tilde{\omega}\in\mathbb{C} \;\Big|\; \left(\re{\,\tilde{\omega}} + \frac{\mu q}{2}\right)^2 + (\im{\,\tilde{\omega}})^2 < \left(\frac{\mu q}{2}\right)^2\right\},
\ee
region in $\mathbb{C}$ corresponding to an open disk of radius $\mu q/2$ centred at $(-\mu q/2,0)$. This result leads to the following theorem

\begin{figure}[h!]
\centering
\includegraphics[width=0.5\columnwidth]{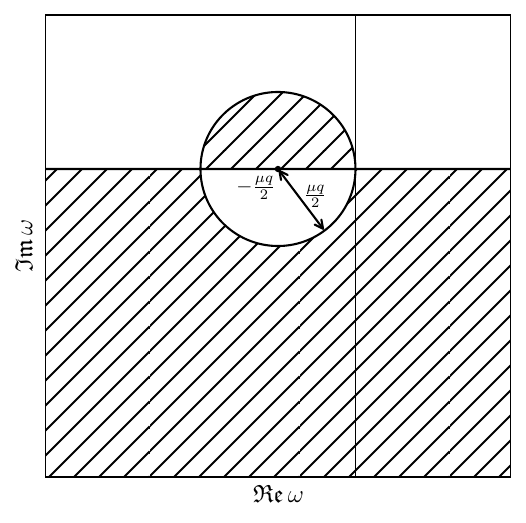}
\caption{The region $S$ in Theorem \ref{thm:energy}.}
\label{fig:EposRegion}
\end{figure}

\begin{theorem}\label{thm:energy}
Let $\xi(0,z)$ be a scalar QNM with charge $q$ on the RN-AdS$_4$ background \eqref{eq:RN} at chemical potential $\mu$. Let $\omega$ be its associated eigenfrequency, such that $\xi(\tau,z)=e^{-i\omega\tau}\xi(0,z)$. Then, for $\omega$ with $\im\,\omega\neq0$, its energy \eqref{eq:Epsi} $E_\psi[\xi(\tau,z)]>0 \;\forall \tau \in \mathbb{R}$ if and only if $\omega\in S = S_1\cup S_2$ (see figure \ref{fig:EposRegion}), where
\bea
S_1 &\equiv& \{\tilde{\omega}\in\mathbb{C} \;\big|\; \im\,\tilde{\omega}>0,\; \left(\re\,\tilde{\omega} + \frac{\mu q}{2}\right)^2+(\im\,\tilde{\omega})^2<\left(\frac{\mu q}{2}\right)^2\}, \\
S_2 &\equiv& \{\tilde{\omega}\in\mathbb{C} \;\big|\; \im\,\tilde{\omega}<0,\; \left(\re\,\tilde{\omega} + \frac{\mu q}{2}\right)^2+(\im\,\tilde{\omega})^2>\left(\frac{\mu q}{2}\right)^2\}.
\eea
\end{theorem}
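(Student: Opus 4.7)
The plan is to combine two independent expressions for $\partial_\tau E_\psi$ to derive a closed form for $E_\psi[\xi(0,z)]$ in terms of $\omega$ and $|Z(1)|^2$, and then read off the sign condition. This is cleaner than arguing directly from the complicated quadratic form \eqref{eq:Epsi}, because the horizon flux identity \eqref{eq:flux_1QNM} has already packaged all of the relevant boundary physics into a compact expression.

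First I would exploit the time dependence of a QNM. Since $\xi(\tau,z) = e^{-i\omega\tau}\xi(0,z)$ and $E_\psi[\xi] = \langle \xi, \xi \rangle$ is sesquilinear in the energy inner product \eqref{eq:innerprod}, we immediately have
\begin{equation}
E_\psi[\xi(\tau,z)] = e^{2\,\im\,\omega\,\tau}\,E_\psi[\xi(0,z)],
\end{equation}
so the sign of $E_\psi$ is independent of $\tau$ and $\partial_\tau E_\psi = 2\,\im\,\omega\cdot E_\psi$. Equating this with \eqref{eq:flux_1QNM} gives the key identity
\begin{equation}
E_\psi[\xi(0,z)] \;=\; -\,\frac{|\omega|^2 + \mu q\,\re\,\omega}{\im\,\omega}\,|Z(1)|^2,
\end{equation}
which makes the positivity question purely algebraic.

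Next I would do the sign analysis. Assuming $|Z(1)|\neq 0$, $E_\psi>0$ iff $(|\omega|^2+\mu q\,\re\,\omega)/\im\,\omega<0$. Completing the square in $|\omega|^2+\mu q\,\re\,\omega=(\re\,\omega+\mu q/2)^2+(\im\,\omega)^2-(\mu q/2)^2$, the condition splits into the two cases in the statement: for $\im\,\omega>0$ one needs to sit \emph{inside} the disk of radius $\mu q/2$ centred at $(-\mu q/2,0)$, giving $S_1$; for $\im\,\omega<0$ one needs to sit \emph{outside} that disk, giving $S_2$. The boundary of the disk (where the numerator vanishes) and the real axis (excluded by hypothesis) are not in $S$.

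The main subtlety, and what I expect to be the only real obstacle, is the degenerate case $Z(1)=0$. If this could happen for a nontrivial QNM with $\im\,\omega\neq 0$, our identity would force $E_\psi\equiv 0$ and the theorem would fail in the ``if'' direction. However, a Frobenius analysis at the horizon $z=1$ shows that the ingoing (regular in hyperboloidal coordinates) solution is uniquely determined up to overall scale by its value there, so $Z(1)=0$ forces the trivial solution. I would include this as a short lemma or remark so that the derivation above applies to every genuine QNM, after which the ``if and only if'' statement of the theorem follows immediately from the sign analysis.
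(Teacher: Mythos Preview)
Your proposal is correct and follows essentially the same route as the paper: both combine the exponential time dependence $E_\psi[\xi(\tau,z)] = C e^{2\,\im\,\omega\,\tau}$ with the horizon flux identity \eqref{eq:flux_1QNM} to pin down the sign of $C$. The paper runs this as a four-case sign analysis rather than writing your closed-form expression for $E_\psi[\xi(0,z)]$, and it leaves the $Z(1)=0$ degeneracy implicit, so your explicit Frobenius remark is a welcome addition rather than a departure.
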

    
\begin{proof}
From \eqref{eq:Epsi}, the time dependence of the energy of a single QNM $\xi(\tau,z)=e^{-i\omega\tau}\xi(0,z)$ is given by
\be
\label{eq:Eproof}
E_\psi[\xi(\tau,z)] = C e^{2\im\,\omega\, \tau},
\ee
with constant $C$.

\textit{$(\Rightarrow)\,$by contradiction}:
\begin{itemize}
    \item $\im\, \omega > 0$: Assume $\omega \in S^c$. Then \eqref{eq:flux_1QNM} implies $\p_\tau E_\psi[\xi(\tau,z)] \leq 0$ which by \eqref{eq:Eproof} implies $C \leq 0$ and thus $E_\psi[\xi(\tau,z)] \leq 0$ yielding a contradiction.
    \item $\im\, \omega < 0$: Assume $\omega \in S^c$. Then \eqref{eq:flux_1QNM} implies $\p_\tau E_\psi[\xi(\tau,z)] \geq 0$ which by \eqref{eq:Eproof} implies $C \leq 0$ and thus $E_\psi[\xi(\tau,z)] \leq 0$ yielding a contradiction.
\end{itemize}

\textit{$(\Leftarrow)$}:
\begin{itemize}
    \item $\im\, \omega > 0$: When $\omega \in S$ \eqref{eq:flux_1QNM} implies $\p_\tau E_\psi[\xi(\tau,z)] > 0$, which by \eqref{eq:Eproof} implies $C > 0$ and thus $E_\psi[\xi(\tau,z)] > 0$.
    \item $\im\, \omega < 0$: When $\omega \in S$ \eqref{eq:flux_1QNM} implies $\p_\tau E_\psi[\xi(\tau,z)] < 0$, which by \eqref{eq:Eproof} implies $C > 0$ and thus $E_\psi[\xi(\tau,z)] > 0$.
\end{itemize}
\end{proof}

Thus far, the analysis has been concerned with linear perturbations formed of a single QNM. We have found that, provided that its frequency $\omega$ satisfies the superradiant condition \eqref{eq:superrCond}, the associated energy is monotonically increasing, i.e. it is not a transient effect.

Let us now consider a linear perturbation constructed from a sum of 2 QNMs $\chi(\tau,z)=c_1 e^{-i\omega_1\tau}Z_1(z)+c_2e^{-i\omega_2\tau}Z_2(z)$, with flux \eqref{eq:Epsi_flux}
\bea
\p_\tau E_\psi=&-&|c_1|^22e^{2 \im\,\omega_1\tau}\(|\omega_1|^2 + \mu q \re\,\omega_1\)|Z_1(1)|^2 -|c_2|^22e^{2 \im\,\omega_2\tau}\(|\omega_2|^2 + \mu q \re\,\omega_2\)|Z_2(1)|^2 \nonumber \\
&-& 2 \re \left[c_1 \overline{c}_2 e^{-i(\omega_1 - \overline{\omega}_2)\tau}(2\omega_1\overline{\omega}_2+\mu q (\omega_1+\overline{\omega}_2))Z_1(1)\overline{Z}_2(1)\right].
\eea
Note that the first two terms above correspond exactly to the flux of each QNM alone, as seen in \eqref{eq:flux_1QNM}. However, also note the appearance of the third term mixing QNM$_1$ and QNM$_2$ -- a non-orthogonal term that opens the door for a transient period where $\p_\tau E_\psi>0$ even when neither $\omega_1$ or $\omega_2$ satisfy the superradiant condition \eqref{eq:superrCond}. This highlights how the phenomenon of transient superradiance studied in this paper is rooted in non-normal physics.

\section{Discussion}\label{discussion}

In this work we have shown that transient energy growth can occur for linear perturbations of black holes, even in situations where all QNMs are decaying.\footnote{Here, superradiance is absent only at the classical level, however note it was recently pointed out that quantum effects can also induce it \cite{Cartwright:2025fay}.} We focussed on the prototypical example of the holographic superconductor. This system shares many common features with the plane-Poiseulle flow in fluid dynamics, as listed in table \ref{tab:analogy}. In both systems, transient energy growth occurs by borrowing energy from a bath -- the charged black hole and the mean fluid flow, respectively. In the former it corresponds to a transient breaking of $U(1)$ symmetry, while in the latter, to the breaking of translational invariance. 

The superconducting instability can be seen as the classical field analogue of Schwinger pair production in AdS$_2$~\cite{Pioline:2005pf, Hartnoll:2011fn, Anninos:2019oka}. A natural question is therefore whether the transients that we have seen here are also rooted in AdS$_2$ physics. Indeed, in the case where the critical temperature of the holographic superconductor is very low and the QNM spectra appear similar to that of AdS$_2$, the transient growth is more pronounced. It would also be interesting to investigate the connection to the Aretakis instability \cite{Aretakis:2012ei} for this reason (see also \cite{Murata:2013daa, Gralla:2016sxp}).

We analysed the maximum energy growth through the construction of optimal perturbations. However, it does not appear to be the case that such perturbations are fine tuned. Preliminary results indicate that simply adding Gaussian initial data peaked near the horizon has similar energy dynamics. Thus, the large coefficient functions in a QNM expansion of optimal perturbations are likely reflective of the poor level of orthogonality displayed by these modes, rather than being indicative of fine tuning. Indeed, they are of the same order of magnitude as the excitation coefficients defined using the orthogonality relation of \cite{Green:2022htq}.

When is the transition between transient and modal behaviour? This question is particularly relevant when trying to extract QNMs from time-domain signals. Without prior knowledge of the transition point, this process could result in an attempt to measure a QNM amplitude from a transient signal. Such considerations are relevant in black hole ringdown, and specifically within the spectroscopy programme~\cite{Zhu:2023mzv}, as well as in the context of thermalisation of strongly coupled systems within AdS/CFT. 

The effects we discuss arise from dissipation via energy loss through the black hole horizon. One may wonder how this is reflected in the dynamics of one point functions in a dual theory. The dissipative nature of the horizon gives rise to damped QNMs and in linear response, CFT one point functions are given by a sum of these modes. In the bulk, this dissipation enabled us to construct states which take an arbitrarily long time to thermalise. In the boundary it is easy to see that a transiently growing one point function can also be arranged by carefully choosing the coefficients in a sum of decaying exponentials. These two effects are related through reading off QNM coefficients from bulk initial data.

In the astrophysical context, it is by now well established that superradiance can facilitate the detection of light bosonic degrees of freedom, relevant in searches for dark matter and physics beyond the Standard Model \cite{Arvanitaki:2009fg, Arvanitaki:2010sy, Pani:2012vp, Brito:2013wya, Herdeiro:2014goa}. Specifically, for spinning massive black holes, the spin-down rate can put constraints on the mass of the bosonic fields triggering superradiance. Traditional analysis, usually considered in the form of scattering monochromatic waves off the black hole, shows that only bosons with mass below a certain threshold can lead to superradiance. In this work we have shown that superradiance can also be seen transiently, meaning that one might see a spin-down of the black hole followed by a spin-up, in situations where the mass of the boson is not in the superradiant range.

A by-product of this work was the introduction of the truncated-Hamiltonian pseudospectrum. This quantity is numerically convergent. This is an advantage compared to the full-Hamiltonian pseudospectrum which is not, at least in hyperboloidal foliations. In null coordinates, the full pseudospectrum enjoys improved convergence properties~\cite{Boyanov:2023qqf}, and also when the energy norm is appropriately modified to include higher derivative terms that enforce a functional space of higher regularity~\cite{Warnick:2013hba}. The truncated-Hamiltonian pseudospectrum, also imposes restrictions on the functional space, since it only takes into account a subset of the spectrum. It is thus an alternative to existing approaches in the literature to UV regulate the pseudospectrum.

We also introduced a theorem for QNM energies, giving a rigorously determined region in the complex $\omega$ plane where QNMs with positive energies lie. It is possible that knowledge of this region can be used as an additional ingredient in analytic bootstrap approaches to QNM properties, for example by combining it with recently developed causality bounds \cite{Heller:2022ejw, Heller:2023jtd}, or knowledge of pole-skipping point locations \cite{Grozdanov:2017ajz, Blake:2017ris, Blake:2018leo}.

Finally, it is worth reiterating that the mechanism behind transient growth is linear. What remains to be seen is whether such growth can source and sustain interesting nonlinear effects.\footnote{See \cite{Bosch:2019anc} for an existing exploration of nonlinear superradiant dynamics in this model.} Thus, an important open question is whether there is a sustained period of out-of-equilibrium dynamics, akin to the transition to turbulence in shear flows. From the holographic perspective this would represent novel states of strongly interacting QFTs. We hope to report on this in future work.

\section*{Acknowledgements}
It is a pleasure to thank Mark Mezei, Alex Ratcliffe and Laura Sberna for discussions, and Roberto Emparan for comments on the draft.
B.W. would like to thank Ghent University and the organisers of `Quantum Dynamics at Ghent 2025' for their hospitality, where this work was also presented.
J.C. is supported by the Royal Society Research Grant RF/ERE/210267. 
C.P. is supported by a Royal Society -- Research Ireland University Research Fellowship via grant URF/R1/211027. 
B.W. is supported by a Royal Society University Research Fellowship and in part by the Science and Technology Facilities Council (Consolidated Grant ``New Frontiers In Particle Physics, Cosmology And Gravity'').

\appendix
\section{Calculations for the perturbed plane-Poiseuille flow} \label{app:PPF}
The incompressible Navier-Stokes equation is given by,
\be \label{NS}
(\partial_t + \vec{u}\cdot\vec{\nabla})\vec{u} = -\rho^{-1}\vec{\nabla} P + \nu\nabla^2\vec{u},
\ee
where $\vec{u}$ is the fluid 3-velocity, $P$ the pressure, $\rho$ a constant density and $\nu$ the constant kinematic viscosity. We consider plane-Poiseuille flow plus a stream function perturbation, $\Phi(t,x,y)$, and pressure perturbation $\delta P(t,x,y)$, viz.
\bea
u^x &=& U(y) +  \partial_y\Phi(t,x,y),\\
u^y &=& 0 - \partial_x \Phi(t,x,y),\\
u^z &=& 0,\\
P &=& -2\nu\rho x + \delta P(t,x,y),
\eea
where $U(y) = 1-y^2$. The perturbed flow is incompressible, $\vec{\nabla}\cdot \vec{u} = 0$, and we impose no-slip boundary conditions $\Phi(t, \pm 1) = \partial_y \Phi(t,\pm 1) = 0$. In order to correctly account for energy in the system we work to second order in perturbation theory,
\bea
\delta P &=& \epsilon P^{(1)}(t,x,y) + \epsilon^2 P^{(2)}(t,x,y) + O(\epsilon)^3\\
\Phi &=& \epsilon \Phi^{(1)}(t,x,y) + \epsilon^2 \Phi^{(2)}(t,x,y) + O(\epsilon)^3,
\eea
where $\epsilon$ is a formal parameter counting orders in the expansion. Let us consider each order in turn.

At $O(\epsilon)$ we take a real perturbation formed as follows:
\be
\Phi^{(1)}(t,x,y) = \phi(t,y) e^{i \alpha x} + \overline{\phi}(t,y) e^{-i \alpha x} \label{Phiansatz}
\ee
with wavenumber $\alpha \neq 0$. $P^{(1)}$ is determined algebraically by the $x$-component of \eqref{NS} at order $\epsilon$. 
The $y$ component of \eqref{NS}, after inserting $P^{(1)}$, gives the Orr-Sommerfeld equation \eqref{OSeq}, where the Orr-Sommerfeld operator is
\be \label{OSoperator}
O_\text{OS} = -\Delta_2^{-1}
\left[(i\alpha\, Re)^{-1} \Delta_2^2 - U(y)\Delta_2 +U''(y)\right],
\ee
with Reynolds number $Re = \nu^{-1}$, and where $\Delta_2 = \partial_y^2-\alpha^2$ is the spatial Laplacian for the $x,y$-plane.

At $O(\epsilon)^2$ we have perturbations sourced by $O(\epsilon)$ terms. These take the form,
\be
\Phi^{(2)}(t,x,y) = \delta\phi_0(t,y) + \sum_\pm \delta\phi_{\pm\alpha}(t,y) e^{\pm i 2\alpha x}.
\ee
The zero-momentum piece $\partial_y\delta\phi_0$ obeys the following diffusion equation,
\be
\left(\partial_t - \nu \partial_y^2\right)\partial_y \delta \phi_0 = i \alpha \partial_y\left(\phi \partial_y \overline{\phi} - \overline{\phi} \partial_y \phi\right), \label{deltaphieq}
\ee
with viscosity $\nu$ serving as the diffusivity, and a current source term coming from the $O(\epsilon)$ perturbations. $P^{(2)}$, $\delta\phi_{\pm\alpha}(t,y)$ are determined by other differential equations but we will not need them here.

To $O(\epsilon)^2$ the energy of the perturbed flow evaluates to
\bea
E = \int_{-1}^1 dy \int dx \, \overline{u}\cdot u = \text{vol}_x \bigg(\frac{16}{15} &+& 2\epsilon^2 \int_{-1}^1 \left(|\partial_y \phi|^2 + \alpha^2 |\phi|^2 \right)dy \\
&+& 2\epsilon^2 \int_{-1}^1 (1-y^2)\,\partial_y\delta\phi_0 \,dy + O(\epsilon)^3\bigg).
\eea
We isolate the two contributions at order $\epsilon^2$, which we write as
\bea
E_\phi &=& \int_{-1}^1 \left(|\partial_y \phi|^2 + \alpha^2 |\phi|^2 \right)dy,\\
\delta E_U &=& \int_{-1}^1 (1-y^2)\, \partial_y\delta\phi_0\, dy.
\eea
$E_{\phi}$ matches the energy norm $||\phi||^2_H$ in \cite{Reddy93} equation (A.6).\footnote{The second term in \cite{Reddy93} (A.6) appears to have a typo, reading $|\phi|$ instead of $|\phi|^2$. The factors of $2$ in our energies come from a different perturbation strength in \eqref{Phiansatz}.}  Note that $E \geq 0$ and $E_\phi \geq 0$ but no such restriction exists for $\delta E_U$. One can show that
\be
\partial_t\left(E_\phi + \delta E_U\right) = -\frac{2}{Re}\int_{-1}^1\left(|\partial_y^2 \phi|^2 + 2 \alpha^2 |\partial_y \phi|^2 + \alpha^4 |\phi|^2\right) dy \leq 0. \label{dEviscous}
\ee
Thus, the total energy at order $\epsilon^2$, $E_\phi + \delta E_U$, is strictly non-increasing, and can decrease due to viscous dissipative effects. However, $\partial_t E_\phi$ alone contains a non-viscous bulk term which is not sign-definite and this is where transient growth can appear.

To see the behaviour of $E_\phi$ in practise, we consider the growth curve example given in \cite{Reddy93} at parameter choices $\alpha = 1$, $\nu^{-1} = Re = 5000$. We use $200$ Chebyshev grid points and the subspace $W$ is formed of $30$ eigenfunctions, whose eigenvalues are plotted in figure \ref{fig:OSspec}. We first evaluate $E_\phi$ for an optimal perturbation corresponding to $E_\phi(t=0) = 1$ and $E_\phi$ which achieves the peak of the growth curve, using initial data corresponding to the right principal singular eigenvector of the truncated time evolution operator. From figure \ref{fig:OSEEE} we see that $E_\phi$ displays transient growth, and then modally returns to zero over a time set by the longest lived QNM, which in this case is quite long: $(-\im{\,\omega})^{-1} \simeq 571$. This is the result established in \cite{Reddy93}. Unlike the growth curve for transient superradiance (see the discussion surrounding figure \ref{fig:GrowthFactor}), here increasing the dimension of $W$ does not lead to an increased peak of the curve.

To evaluate $\delta E_U$ we solve for $\partial_y\delta\phi_0$ in \eqref{deltaphieq} using the Crank-Nicolson method, starting from $\delta E_U = \delta\phi_0(t=0) = 0$. The result is also shown in figure \ref{fig:OSEEE}, showing a decrease $\delta E_U$. This is the counterpart to the result of \cite{Reddy93}, i.e. the missing piece that ensures that the total $E_\phi + \delta E_U$ cannot increase (it just decreases due to viscous dissipation). Thus one may interpret this transient process as an interaction between the quadratic zero-momentum mode $\delta\phi_0$ and the two linear momentum modes $\phi$, $\overline{\phi}$ -- a coupling which exists only due to the background flow $U(y)$.

\section{Coefficients appearing in energy and charge integrals}\label{app:wpq}

The coefficient functions appearing in \eqref{energyintegral} are given by,
\bea
w_E(z)&=&z^4\tilde{w}_E(z),\nonumber\\
p_E(z)&=&z^4\tilde{p}_E(z),\nonumber\\
q_E(z)&=&z^4\tilde{q}_E(z)-2\p_z\left(\tilde{p}_E(z)z\right)z^2+z^2 \vec{k}^2, \nonumber\\
\alpha_1(z)&=&z^4 \tilde{\alpha}_1(z), \nonumber\\
\alpha_2(z)&=&z^4\tilde{\alpha}_2(z),
\eea
where
\bea
\tilde{w}_E(z) &=& -\frac{\left((4-4z^3+z^4\,\mu^2-z^3\mu^2)^2 h'^2-16\right)}{4 z^2 \left(\mu ^2 z^4-\left(\mu ^2+4\right) z^3+4\right)},\nonumber \\
\tilde{p}_E(z) &=& \frac{\mu ^2 z^2}{4}+\frac{1}{z^2}-\frac{1}{4} \left(\mu ^2+4\right) z,\nonumber \\
\tilde{q}_E(z) &=& \frac{2 \left(\mu ^2 \left(2 q^2-1\right) z^3+z^2 \left(4-2 \mu ^2 q^2\right)+4 z+4\right)}{z^4 \left(\mu ^2 z^3-4 z^2-4 z-4\right)}, \nonumber\\
\tilde{\alpha}_1(z) &=& -i\frac{\mu  q (1-z) \left((4-4z^3+z^4\,\mu^2-z^3\mu^2)^2 h'^2-16\right)}{4 z^2 \left(\mu ^2 z^4-\left(\mu ^2+4\right) z^3+4\right)}, \nonumber\\
\tilde{\alpha}_2(z) &=& -i\frac{\mu  q (z-1)^2\left(4(1+z+z^2)-z^3\mu^2\right) h'}{4z^2}\,.
\eea
Note that to arrive at \eqref{energyintegral} we have removed the boundary term $\int_0^1 dz \,\p_z(\tilde{p}_E(z)2z^3\overline{\chi}\chi)=0$.

The coefficient functions appearing in \eqref{chargeintegral} are given by,
\bea
w_Q(z) &=& z^4 \tilde{w}_Q(z),\nonumber \\
p_Q(z) &=& z^4 \tilde{p}_Q(z), \nonumber\\
q_Q(z) &=& z^4 \tilde{q}_Q(z) + 2z^3 \tilde{p}_Q(z).
\eea
where 
\bea 
\tilde{w}_Q(z) &=&-i\,q\frac{\left(-16 + \left(4 + z^4 \mu^2 - z^3 (4 + \mu^2)\right)^2 h'^2\right)}{4 z^2 \left(4 + z^4 \mu^2 - z^3 (4 + \mu^2)\right)},\nonumber\\
\tilde{p}_Q(z) &=&-i\,q \frac{\left(4 + z^3 \left(-4 + (-1 + z) \mu^2\right)\right) h'}{4 z^2}, \nonumber\\
\tilde{q}_Q(z) &=& \frac{4 \mu  q^2}{-\mu ^2 z^5+4 z^4+4 z^3+4 z^2}.
\eea
Note that these functions are written in terms of $h'(z)$, where $h(z)$ is given in \eqref{hchoice} for our specific case.

\section{Proof of \eqref{eq:pseudoW}}\label{app:pseudoW}
Given the formal expansion,
\be
(\omega-\mathcal{H}_W)^{-1}=\frac{\mathbb{I}}{\omega}+\frac{\mathcal{H}_W}{\omega ^2}+\frac{\mathcal{H}_{W}^2}{\omega ^3}+\ldots\,,
\ee
it can be shown that
\bea
\|(\omega -\mathcal{H}_W)^{-1} \xi(0,z)\|_{E_\psi}^{2} &=& \langle (\omega -\mathcal{H}_W)^{-1} \xi(0,z) , (\omega -\mathcal{H}_W)^{-1} \xi(0,z) \rangle \nonumber\\
&=& \sum_{n,m=1}^{M}c_{n}^*c_m \langle (\omega -\mathcal{H}_W)^{-1} \tilde{\xi}_n , (\omega -\mathcal{H}_W)^{-1} \tilde{\xi}_m \rangle \nonumber\\
&=& \sum_{n,m=1}^{M}c_{n}^*c_m \[(\omega-\omega_{n})^{-1}\]^*(\omega-\omega_{m})^{-1} \langle \tilde{\xi}_n , \tilde{\xi}_m \rangle\nonumber\\
&=& \sum_{n,m=1}^{M}\sum_{j,k=1}^{M}\[(\omega-\omega_{n})^{-1} c_n\]^* \[(\omega-\omega_{m})^{-1} c_m\] ((U_W)_{jn})^* (U_W)_{km} \langle \zeta_j, \zeta_k \rangle \nonumber\\
&=& \sum_{n,m=1}^{M}\sum_{j=1}^{M} \[(U_W)_{jn} (\omega-\omega_{n})^{-1} c_n\]^* \[(U_W)_{jm} (\omega-\omega_{m})^{-1} c_m \] \nonumber\\
&=& \[U_W (\omega-D_W)^{-1} \vec{c}\]^*\[U_W (\omega-D_W)^{-1} \vec{c}\] \nonumber\\
&=& \|U_W (\omega-D_W)^{-1} \Vec{c}\,\|_{2}^{2} \nonumber\\
&=& \|U_W (\omega-D_W)^{-1} U_{W}^{-1} \,\Vec{d}\,\|_{2}^{2} \nonumber\\
&=& \|(\omega-H_W)^{-1}\,\Vec{d}\,\|_{2}^{2}.
\eea
Similarly,
\bea
\|\xi(0,z)\|_{E_\psi}^2 &=& \langle \xi(0,z) , \xi(0,z) \rangle \nonumber\\
&=& \sum_{n,m=1}^{M}c_{n}^*c_m \langle \tilde{\xi}_n , \tilde{\xi}_m \rangle\nonumber\\ 
&=& \sum_{n,m=1}^{M}\sum_{j,k=1}^{M}c_n^* c_m ((U_W)_{jn})^* (U_W)_{km} \langle \zeta_j, \zeta_k \rangle \nonumber\\
&=& \(U_W \vec{c}\)^*\(U_W \vec{c}\)\nonumber\\ 
&=& \|U_W \Vec{c}\,\|_{2}^{2}\nonumber\\ 
&=& \|\Vec{d}\,\|_{2}^{2},
\eea
such that
\bea
\|(\omega-\mathcal{H}_W)^{-1}\|_{E_\psi}^{2}&=&\sup_{\xi(0,z) \in W} \frac{\|(\omega-\mathcal{H}_W)^{-1} \xi(0,z)\|_{E_\psi}^{2}}{\|\xi(0,z)\|_{E_\psi}^{2}}\nonumber\\& =& \max_{\vec{d}\in \mathbb{C}^M} \frac{\|(\omega-H_W)^{-1}\,\Vec{d}\,\|_{2}^{2}}{\|\Vec{d}\,\|_{2}^{2}} \nonumber\\
&=& \|(\omega-H_W)^{-1}\|_{2}^{2}.
\eea

In the above, we have used the expansion~\eqref{eq:expansion}, the change of basis $\tilde{\xi}_n = \sum_{m=1}^{M} (U_W)_{mn} \zeta_m$,  and the orthonormality relation $\langle \zeta_j , \zeta_k \rangle = \delta_{jk}$.

\bibliographystyle{ytphys}
\bibliography{refs}

\end{document}